\title{Wake Up and Join Me! An Energy-Efficient Algorithm for Maximal Matching in Radio Networks}
\newcommand{\email}[1]{\emph{#1}}
\author{Varsha Dani\thanks{Department of Computer Science, Rochester Institute of Technology. \email{vdani@cs.rit.edu}} 
\and Aayush Gupta\thanks{Department of Computer Science, University of New Mexico. \email{\{aayush,hayes\}@unm.edu}  Partially supported by NSF grant CCF-1150281.} 
\and Thomas P. Hayes$^\dagger$ 
\and Seth Pettie\thanks{Department of Electrical Engineering \& Computer Science, University of Michigan. \email{pettie@umich.edu}  Partially supported by NSF grant CCF-1815316.}}
\date{}
\newtheorem{theorem}{Theorem}[section]
\newtheorem{definition}[theorem]{Definition}
\newtheorem{proposition}[theorem]{Proposition}
\newtheorem{lemma}[theorem]{Lemma}
\newcommand{\tmax}{T}
\newcommand{\Prob}[1]{\mathbb{P}\left(#1\right)}
\newcommand{\Exp}{\mathbb{E}}
\newcommand{\ExpCond}[2]{\Exp\left(\left. #1 \;\right| #2 \right)}
\newcommand{\indicator}[1]{ \mathbf{1} \!\left( #1 \right) }
\newcommand{\partner}{{\tt partner}}
\DeclareMathOperator{\polylog}{polylog}
\newcommand{\eps}{\varepsilon}
\begin{document}

\maketitle

\thispagestyle{empty}

\begin{abstract}
We consider networks of small, autonomous devices that communicate with each other wirelessly. Minimizing energy usage is an important consideration in designing algorithms for such networks, as battery life is a crucial and limited resource.  Working in a model where both sending and listening for messages deplete energy, we consider the problem of finding a maximal matching of the nodes in a radio network of arbitrary and unknown topology.  

We present a distributed randomized algorithm that produces, with high probability, a maximal matching.  
The maximum energy cost per node is $O\big((\log n)(\log \Delta)\big),$ and the time complexity is $O(\Delta \log n)$.
Here $n$ is any upper bound on the number of nodes, and $\Delta$ is any upper bound on the maximum degree;
$n$ and $\Delta$ are parameters of our algorithm that we assume are known \emph{a priori} to all the processors.
We note that there exist families of graphs for which our bounds on energy cost and time complexity are simultaneously
optimal up to polylog factors, so any significant improvement would need additional assumptions about the network topology.

We also consider the related problem of assigning, for each node in the network, a neighbor to back up its data 
in case of eventual node failure.  Here, a key goal is to minimize the maximum \emph{load}, defined as the number of nodes assigned
to a single node.  We present an efficient decentralized low-energy algorithm that 
finds a neighbor assignment whose maximum load is at most a polylog($n$) factor bigger that the optimum. 
\end{abstract}

\setcounter{page}{0}
\newpage
\section{Introduction}
\label{sec:intro}

For networks of small computers, energy management and conservation is often a major concern.
When these networks communicate wirelessly, usage of the radio transceiver to
send or listen for messages is often one of the dominant causes of energy usage.
Moreover, this has tended to be increasingly true as the devices have gotten smaller;
see, for example, \cite{polastre2005telos, barnes2010ens,heinzelman2000energy}.  
Motivated by these considerations, Chang \emph{et al.}~\cite{chang2018energy}
introduced a theoretical model of distributed computation in which each send or listen operation
costs one unit of energy, but local computation is free.  Over a sequence of discrete 
timesteps, nodes choose whether to sleep, listen, or send a message of $O(\log n)$ bits.
A listening node successfully receives a message only when exactly one of its neighbors
has chosen to send in that timestep; otherwise it receives no input.

It is not uncommon for research on sensor networks to make assumptions about the
topology of the network, such as assuming the network is defined by a unit disk graph, or
that each node is aware of its location using GPS.  However, we will be interested in the
more general setting where we make almost no assumptions about the network topology.
We will assume that communication takes place via radio broadcasts, and that there is 
an arbitrary and unknown undirected graph $G$ whose edges indicate which pairs of nodes are 
capable of hearing each other's broadcasts.  We will, however, assume that each node is
initialized with shared parameters $n$ and $\Delta$, which are upper bounds on, respectively,
the total number of nodes, and the maximum degree of any node.  By designing algorithms to
operate without pre-conditions on, or foreknowledge of, the network topology, we 
potentially broaden the possible applications of our algorithms, and, by extension, of sensor networks.
For instance, we can imagine a network of small sensors scattered rather haphazardly from
an airplane passing over hazardous terrain; the sensors that survive their landing are unlikely
to be placed predictably or uniformly.

In this model, \cite{chang2018energy} presented a polylog-energy, polynomial-time
algorithm for the problem of one-to-all broadcast.  A later paper by Chang, Dani, Hayes, and Pettie~\cite{chang2020energy}
gave a sub-polynomial ($n^{o(1)}$) energy, polynomial-time algorithm for the related problem of breadth-first search.  An earlier body of work examined energy complexity
in \emph{single hop} networks~\cite{BenderKPY18,ChangDJ21,ChangKPWZ19,JurdzinskiKZ02,JurdzinskiKZ02b,JurdzinskiKZ02c,kardas2013energy,NakanoO00}, 
i.e., in which the network topology is known to be a clique.

In the present work, we will be concerned with another fundamental problem of graph theory, namely to find large 
sets of pairwise disjoint edges, or \emph{matchings}.  The problem of finding large matchings
has been thoroughly studied in a wide variety of computational models dating back more than a
century, to K\"onig~\cite{konig1916graphen}.  
For a fairly comprehensive review of past results, we recommend Duan and Pettie~\cite[Section 1]{duan2014linear}.

The main goal of the present work is to present a polylog-energy, polynomial-time distributed
algorithm that computes a maximal matching in the network graph.  The term maximal here indicates 
that the matching
intersects every edge of the graph, and therefore cannot be augmented without first removing edges.
It is well-known that a maximal matching necessarily has at least half as many edges as the
largest, or ``maximum'' matching.  In fact, as we discuss in Section~\ref{sec:max-max}, maximal matchings
are often significantly closer to being maximum than the aforementioned fact would indicate.

\begin{theorem} \label{thm:main}
    Let $G$ be any graph on at most $n$ vertices, of maximum degree at most $\Delta$.  
    Then Algorithm~\ref{alg:main} always terminates in $O(\Delta \log n)$ timesteps, 
    at which point each node knows its partner in a matching, $M$.
    Furthermore,
    \[
    \Prob{M \mbox{ is a maximal matching and every node used energy } \le 2C (\log n)(\log \Delta)} \ge 1 - \frac{1}{n^2}.
    \]
\end{theorem}

Observe that the per-node energy use is polylog($n$), which obviously cannot be 
improved by more than 
a polylog factor.  Moreover, the time complexity bound, $O(\Delta \log n)$, is also nearly optimal, 
when one considers that $G$ could contain a clique of size $\Delta$, in which case, in order for 
all the nodes in that clique to get even one chance to send a message and have it received by the other
nodes in the clique, there must be at least $\Delta$ timesteps, since our model does not allow a node
the possibility to receive two or more messages in a single round.  To put this another way, when
$\Delta$ is small, a high degree of parallelism is possible, which our algorithm exploits; but, when
$\Delta$ is large, there exist graphs for which this parallelism is impossible.

\subsection{Application: Neighbor Assignment}
One possible motivation for finding large matchings, apart from their intrinsic mathematical interest,
comes from the desire to {\bf back up data in case of node failures}.  Suppose we had a perfect matching (that is,
one whose edges contain every node) on the $n$ nodes of our network.  Then the matching could be viewed as
pairing each node with a neighboring node that could serve as its backup device.  This would ensure
that each device has a load of one node to back up, and that each node is directly adjacent to its backup device.

Since perfect matchings are not always available, we consider a more general scheme, in which each node is
assigned one of its neighbors to be its backup device, but we allow for loads greater than one.
Such a function can be visualized as a directed graph, with a directed edge from each node to its backup device.
In this case, each node has out-degree $1$, and load equal to its in-degree.
We would like to minimize the maximum load over all vertices. 

In Section~\ref{sec:buddies}, we will show that, if one is willing to accept a maximum load that is $O(\log n)$
times the optimum, this problem can be simply reduced to the maximal matching problem.  In light of our main
result, this means that, if there exists a neighbor assignment with $\polylog(n)$ maximum load, then we can 
find one on a radio network, while using only $\polylog(n)$ energy.

\subsection{Techniques}
Our matching algorithm can be thought of as a distributed and low-energy version of
the following greedy, centralized algorithm.  Randomly shuffle the $m$ edges.  Then,
processing the edges in order, accept each edge that is disjoint from all previous edges.
Note that this always results in a maximal matching.

To make this into a distributed algorithm, we make each node, in parallel, try to establish 
contact with one of its unmatched neighbors to form an edge.  Since a node can only receive a
message successfully if exactly one of its neighbors is sending, we limit the probability for
each node to participate in a given round, by setting a participation rate that is, with high 
probability, at most the inverse of the maximum degree of the residual graph induced by the 
unmatched nodes.  It turns out that this can be accomplished using a set schedule, where the
participation rate is a function of the amount of elapsed time.

The main technical obstacle in the analysis is proving that the maximum degree of the graph
decreases according to schedule (or faster).  This is achieved by noting that, if not, the
first vertex to have its degree exceed the schedule would have to have been failed to be
paired by our algorithm, despite going through a long sequence of consecutive rounds in which 
its chance to be paired was relatively high.

\subsection{Related Work}

Multi-hop radio network models have a long history, going back at least to work in the early 1990's by Bar-Yehuda, Goldreich, Itai~\cite{bar1991efficient,bar1992time}
among others.  The particular model of energy-aware radio computation we are using was introduced by Chang et al.~\cite{chang2018energy}.

A recent result by Chatterjee, Gmyr, and Pandurangan~\cite{chatterjee2020sleeping} considered the closely related problem
of Maximal Independent Set in another model, called the ``Sleeping model.''  Although it has some interesting similarities to our
work, there are several important differences.  Firstly, we note that although matchings of $G$ are nothing more than independent sets
on the line graph of $G$, in distributed computing, we cannot just convert an algorithm designed to run on the line graph of $G$ into
an algorithm to run on $G$.  Secondly, we note that the Sleeping model is based on the CONGEST model, and so, when a node is awake, 
it is allowed to send a different message to each of its neighbors at a unit cost.  By contrast, in our model, one node can only send one
message in a timestep, and it may collide with messages sent by other nodes.

Moscibroda and Wattenhofer~\cite{moscibroda2005maximal} considered the problem of finding a Maximal Independent Set in a radio network.
Their work also has some interesting similarities to ours, although they are assuming a unit-disk topology, and listening for messages is free 
in their model.  On the other hand, their algorithm works even when the nodes wake up asynchronously at the start of the algorithm.

\section{Preliminaries}

\subsection{Matchings}
A \emph{matching} is a subset of the edges of a graph $G$, such that no two of the edges share an endpoint.
We say a matching is \emph{maximum} if it has at least as many edges as any other matching for $G$.
We say a matching is \emph{maximal} if it is not contained in a larger matching for $G$.
Equivalently, a matching is maximal if every edge of $G$ shares at least one endpoint with an edge from the matching.

For $\alpha > 1$, we say a matching is $\alpha$-approximately maximum if its cardinality is at least $1/\alpha$ times the
cardinality of a maximum matching.  It is an immediate consequence of the definitions that any maximal matching is 
$2$-approximately maximum.

\subsection{Maximal vs. Maximum Matchings} \label{sec:max-max}

Perhaps the main reason why maximal matchings are of interest is as an approximate
solution to the related problem of maximum matchings.
Before we begin, we introduce some notations and terminology.

\begin{definition}
We say a matching $M$ is maximal if it is not a subset of any larger matching;
equivalently, if the complementary set of nodes is an independent set.
For a graph $G$, let $\nu(G)$ denote its \emph{matching number}, that is, the maximum 
number of edges in a matching of $G$.  Let $\beta(G)$ denote the minimum number of edges
in a maximal matching of $G$.
Let $\alpha(G)$ denote the \emph{independence number} of $G$, that is, the maximum
size of an independent set (or anti-clique) of $G$.
\end{definition}

% We also need a notation for the size of the worst possible maximal matching.
% \begin{definition}
% Let $\gamma(G)$ denote the number of edges in a minimum-cardinality maximal matching.
% \end{definition}

The following well-known result says that every maximal matching is at least a $\frac12$-approximation
to the size of the maximum matching.
%\begin{definition}

\begin{proposition} \label{prop:max-max}
Let $G$ be any graph.  Then $\beta(G) \ge \frac{\nu(G)}{2}.$
\end{proposition}

The bound in Proposition~\ref{prop:max-max} is tight, as shown for example, by a path of four vertices.
However, for most graphs, it is rather far from tight.  The following bound is due to M. Zito~\cite[Theorem 2]{zito2003small}.

\begin{proposition}
\begin{enumerate}
\item For every graph $G$, we have $\beta(G) \ge \frac{n - \alpha(G)}{2} 
\ge \nu(G) - \frac{\alpha(G)}{2}$.
\item For a random graph $G = G_{n,p}$, where $p = d/n$, 
the inequality 
\[
\beta(G) \ge \frac{n}{2} \left( 1 - \frac{2 \ln(d)}{d} \right)
\]
holds with probability approaching 1 as $d \to \infty$.
\end{enumerate}
\end{proposition}

A number of analogous, related results are proved in~\cite{zito2003small}, generalizing the above to classes of
random bipartite graphs, random regular graphs, and the case where $d$ is a fixed constant, rather than
tending to infinity.  

We mention another kind of random graph which is popular in distributed computing applications, and particularly for radio networks.  These are the so-called \emph{random geometric graphs}, also known as random unit disk graphs.
% We mention one other class of random graphs, which is popular in distributed computing applications, and
% especially radio networks.  
% Another class of graphs of particular interest in the context of radio networks is the so-called random geometric graphs.
For parameters $n,r$, we define the vertex set by choosing $n$ points (vertices) uniformly at random from a square of area $n$.
Two vertices are considered adjacent if their Euclidean distance is less than $r$.  If we neglect boundary effects, this leads
to an average degree of $\pi r^2$.  For such graphs, we can make the following observation.

\begin{proposition}
Let $G = RGG(n,r)$ be a random geometric graph.  Then, 
\[\beta(G) \ge \frac{n}{2} ( 1 - O(1/d)),\] where $d$ is the
expected average degree of $G$.
\end{proposition}

\begin{proof}
Note that the square of area $n$ can be covered by $O(n/r^2)$
disks of radius $r$, hence this is an upper bound on the independence number of
any radius-$r$ disk graph, and in particular a random one.  Thus $\beta(G) \ge \frac{n}{2} - O(\frac{n}{r^2})$.
Since, for $r = \Omega(\sqrt{\log(n)})$, asymptotically almost surely all of the degrees in $G$ are $\Theta(r^2)$, this
shows that $\beta(G) \ge \frac{n}{2} ( 1 - O(1/d) )$, where $d$ is any vertex degree of $G$.
\end{proof}

Taken together, these results show that, in many settings when the graph is not adversarial, maximal matchings may 
be very good approximations to maximum matchings, especially when the average degree is large.

\subsection{Radio Networks and Energy Usage}

We work in the Radio Network model, where we have a communication network on an arbitrary underlying graph $G$. 
Each node in $G$ is a processor equipped with a transmitter and receiver to communicate with other nodes. 
There is an edge between nodes $u$ and $v$ in the graph if $u$ and $v$ are within transmission range of each other. We note that the graph $G$ is not known to the nodes. In fact we will assume that nodes do not know even who their neighbors are in the graph, until they have explicitly heard from them during the running of the algorithm. 
% We will, however assume that the nodes do have a common estimate of home many nodes there are in the graph. In other words, we do not require the nodes to know $n$ explicitly, but we do requite them to agree on a common value $\lceil \log n \rceil$. 

All of the processors begin in the same configuration, although 
we assume they have access to independent sources of random bits.
As a consequence, they can locally generate $O(\log n)$-bit IDs that are unique, with high probability. 
We assume the nodes each know parameters $(n, \Delta)$, where $n$ is an upper bound on the number of nodes in $G$,
and $\Delta$ is an upper bound on the maximum degree of $G$.  It is important for the correctness of 
our algorithm that these values be shared by all nodes, since they act as a kind of synchronization mechanism.
Accuracy of these shared estimates is not needed for correctness, but both running time and energy usage 
depend on these parameters, so if $n$ and $\Delta$ are gross overestimates, it will result in increased costs
for the algorithm.

%\varsha{We should say somewhere that the graph is not known to the individual processors, and that in fact a processors only knows (a subset of) its neighbors after it has heard from them. } 
Time is divided into discrete timesteps. In each timestep a processor can choose to do one of three actions: transmit, listen, or sleep. A message travels from a node $u$ to a neighbor $v$ of $u$ at time $t$ if 
\begin{itemize}
    \item $u$ decides to transmit at time $t$, 
    \item $v$ decides to listen at time $t$ and 
    \item no other neighbor of $v$ decides to transmit at time $t$.
\end{itemize}
Thus when a node $u$ decides to send a message, that message is heard by {\bf all} neighbors of $u$ that happen to be listening, and for whom none of their other neighbors are sending. 

What happens if node $v$ decides to listen and more than one of its neighbors sends a message? There are several different models for this situation. In the most permissive of these, the LOCAL and CONGEST models, $v$ receives all the messages sent by its neighbors. As already specified, we are not working in these models. 
A more restrictive model is the Collision Detection model (CD) where, when a listening node does not receive a message, it can can tell the difference between silence (no neighbors sending) and a collision (more than one neighbor sending).  Another model of interest is the ``No Collision Detection" model (no-CD), which is even more restrictive: here, collisions between two or more messages are indistinguishable from silence.  Prior work~\cite{chang2018energy,chang2020energy} used exponential backoff to deal with collisions in both the CD and no-CD models, making the distinction between these models less important, except in the case of deterministic algorithms.
The maximal matching algorithm in our current paper works in the most challenging (no-CD) model
despite not using backoff.  This can be seen as a corollary of the very local nature of maximal matchings.  

What about message sizes? The LOCAL model allows nodes to send messages of arbitrary size in a single timestep. CONGEST is the same, but with messages restricted to $O(\log n)$ bits.  In our work we follow the message-size constraint 
of the CONGEST model, i.e., each message is $O(\log n)$ bits.

We measure the cost of our algorithms in terms of their energy usage. We assume that a node incurs a cost of 1 energy unit each time that it decides to send \emph{or} listen. When the node is sleeping there is no energy cost. We also assume that local computation is free. The goal of energy aware computation is to design algorithms where the nodes can schedule sleep and communication times so that the energy expenditure is small, ideally $\polylog (n)$, without compromising the time complexity too much, \emph{i.e.,} the running time is still polynomial in $n$.

%\tom{Say something about measuring running time, and the amount of computational work being done per round of the algorithm.}

\section{Notation}

\subsection{The network}

As mentioned earlier, $G = (V,E)$ is the graph defining our radio network.
We denote $n = |V|$, and refer to the nodes as ``processors.''  Although the
processors are identical, and run identical code, we will assume each node
has a unique ID that it knows and uses as its ``name'' in communication.  We
make the standard observation that, if each node were to generate an independently random
string of $C \log n$ bits as its ID, the probability that all $n$ nodes have
distinct IDs is at least $1 - 1/n^{C-2}$, which can be made overwhelmingly likely.

When we present our pseudocode, it will be written from the perspective of a single processor.
However, most of our analysis will be written from the ``global'' perspective of the entire graph.

\subsection{Measuring time}

To begin with, we define two units of time that will be used throughout the paper.
The smaller unit of time is called a \emph{timestep}, and refers to the basic time unit of our radio
network model: in each timestep the nodes that choose to transmit are allowed to send a single message.

The larger unit of time is called a \emph{round}, and consists of three timesteps of the form
$3t-2, 3t-1, 3t$, where $1 \le t \le \tmax$ is the round number.
As shall be seen, rounds have the property that at the end of each round, 
the aggregate state of the network \emph{encodes a matching}.
More precisely, each node has a variable, \partner, and at the end of each round, this variable is either
the ID of one of its neighbor nodes, or has the value {\tt null}; moreover, whenever, at the end of a round, 
\partner$(v) = w \ne $ {\tt null}, we also have \partner$(w) = v$.

\subsection{The Evolving Matching}

For $t \ge 0$, we denote by $M(t)$ the matching encoded by the network at 
the end of round $t$; this is a random
variable whose value is always a pairwise disjoint set of edges of the graph.
As discussed earlier, $M(t)$ is well defined because, at the end of every round,
all vertices have a mutually consistent view of whom they are paired to.

It will be convenient to define some related random variables, all of which are deterministic functions of $M(t)$.
%It will also be useful to have some notation for other random variables that encode the state of the algorithm after each round.
\begin{itemize}
    \item Let $V(t)$ denote the set of unmatched vertices after round $t$.  That is, $V(t) = V \setminus \left(\bigcup_{e\in M(t)} e\right)$.
    \item Let $G(t)$ denote the subgraph of $G$ induced by $V(t)$.  Thus $G(t) = (V(t),E(t))$, where $E(t) = E \cap \binom{V(t)}{2}$. 
    We will refer to this as the \emph{residual graph at the end of round $t$}, or simply \emph{the residual graph.} 
    \item Similarly, for each surviving vertex $v \in V(t)$, we define its \emph{residual neighbor set at the end of round $t$}, $N(v,t) = N(v) \cap V(t)$,
and its \emph{residual degree}, $d(v,t) = |N(v,t)|$.  We denote the closed residual neighborhood of $v$ at the end of round $t$ by $N[v,t]$, defined as $N[v,t] = (N(v) \cup \{v\}) \cap V(t)$.  For matched vertices, $v \notin V(t)$, we adopt the convention $d(v,t) = 0$. 
    \item Finally, we denote the maximum degree in the residual graph
by $\Delta(t) = \displaystyle \max_{v \in V(t)} d(v,t)$, taking this value to be zero if $V(t)$ is empty.
\end{itemize}

We observe that our matching will be non-decreasing over time, that is, for all $t < t'$, 
$M(t) \subseteq M(t')$ with probability one.  It follows that the quantities $|V(t)|$, $\Delta(t)$ and the residual degrees of the individual vertices are all non-increasing in time.

\section{Maximal Matching Algorithm}

%  For all $(v,w) \in E$, let $X_{v,w,t}$ be the indicator random variable for the event that $v$ and $w$ get matched in round $t$ and let $X_v$ be the indicator random variable for the event that $v$ does not matches in a round $t$. Also, $X_{<t}$ is defined as the vector of all $X_{v',w',t'}$ where $t' < t$.

The basic idea of our algorithm is, starting with the empty matching, 
to greedily add disjoint edges until a maximal matching is achieved.
The challenge is to keep each node's energy cost low.
We achieve this by having nodes wake up at random times, and try to
recruit one of their neighbors to pair with them.  If this succeeds without being 
hampered by additional, redundant, neighbors that also happen to wake up, 
then an edge is added to the matching.

To ensure that both endpoints of the edge agree about who they are paired with,
the nodes execute a three-step ``handshake'' protocol, with the property that, 
if it succeeds, both nodes know that the other node has only been in communication 
with them, and was not, for instance, trying to form an edge with another, 
different, endpoint.

To keep the energy costs low, it is essential that nodes wake up with approximately
the correct frequency.  If the rate is too high, 
%(significantly greater than the inverse of the degree), 
too many nodes will wake up at once, causing collisions.
Even if we get around these collisions by some device, having too many nodes wake
up at once seems likely to lead to excessive energy consumption, since at most one
neighbor of a node can get a message through in a single round.

If, on the other hand, the rate is too low,
%(significantly less than the inverse of the degree), then 
too few nodes will wake up at once, again leading to an excessive waste of energy, 
since a node whose neighbors are all asleep cannot form an edge all by itself.

From the perspective of an individual node, whose goal is to connect with exactly one of its neighbors, the ideal would be that, in any given round, it and its neighbors participate with a probability equal to the inverse of its residual degree at the time. There are, however, two problems with setting this to be the participation rate. Firstly, the nodes do not know even their initial degrees, let alone their evolving degrees in the residual graph.  Secondly, even if these degrees were known, nodes of different degrees would desire different participation rates for their neighbors, but their neighbor sets might overlap.

To get around these difficulties, we want to define a \emph{global} participation rate for each round, that acts as a proxy for each node's ideal participation rate. To this end, we define the function
\[
r(t) = \frac{1}{2 + 3\left(1-\frac{t-1}{\tmax}\right)\Delta} 
\]
where $\tmax = C \Delta \log(n)$. 
The constant $C$ will be specified in the
proof of Theorem~\ref{thm:main}.  
% \tom{Pin down the constant, $C$, here.}
This function, $r(t)$ gives a schedule for 
gradually raising the 
participation probability from $r(1) = \frac{1}{2+3\Delta} =\Theta(1/\Delta)$ 
up to $r(\tmax) = \Theta(1)$.

Initially, when the rate is $\Theta(1/\Delta)$ it will be lower than ideal for all but the highest degree vertices. Nevertheless, there is some chance of some pairings being formed. As the algorithm proceeds, the participation probability increases slowly, while a node's residual degree decreases. So for some rounds during the algorithm, the current participation rate (for everyone) will be approximately equal to the inverse of the node's degree, and those are the rounds when the node is most likely to be matched. 

This completes the informal description of our algorithm. 
For a formal specification, Algorithms~\ref{alg:main}, \ref{alg:recruit} and~\ref{alg:accept}
comprise the full pseudocode for our distributed protocol.

\begin{algorithm}
\caption{Main Algorithm:\ A Low-Energy Distributed implementation of Greedy Maximal Matching
in a Radio Network.} \label{alg:main}
\begin{algorithmic}[1]
    \State $t \gets 1$
    \State \partner $ \gets$ {\tt null}
    \While{\partner $=$ {\tt null} and $t \le \tmax$}
        \State Sample $x$ uniformly from $[0,1]$.
        \If{$x \le r(t)/2$}                            
            \State Do RECRUIT\_PROTOCOL this round.
        \ElsIf{$r(t)/2 < x \le r(t)$}
            \State Do ACCEPT\_PROTOCOL this round.
        \Else
            \State Sleep this round.
        \EndIf
        \State $t \gets t + 1$
    \EndWhile
    \State Sleep for the remaining $\tmax - t$ rounds.
\end{algorithmic}
\end{algorithm}

\begin{algorithm}
\caption{RECRUIT$\_$PROTOCOL:\ Try to form an edge as initial sender.} \label{alg:recruit}
\begin{algorithmic}[1]
    \State At timestep 1, Send $my\_ID$ \Comment{``My name is $my\_ID$ and I am available''}
    \State At timestep 2, Listen 
    \If{message received}
        \State Interpret the message as an ordered pair of integers $(x,y)$
        \If{$x = my\_ID$} \Comment{Match found}
            \State \partner $\gets y$
            \State At timestep 3, send $(x,y)$ \Comment{``$x$ and $y$ are paired''}
        \EndIf
    \Else \State Sleep for timestep 3.
    \EndIf
\end{algorithmic}
\end{algorithm}

\begin{algorithm}
\caption{ACCEPT\_PROTOCOL:\ Try to form an edge as initial listener.} \label{alg:accept}
\begin{algorithmic}[1]
    \State At timestep 1, Listen 
    \If{message received}
        \State Interpret the message as an integer $x$
        \State At timestep 2, Send $(x,my\_ID)$ \Comment{``Hello, lets match up, $x$ and $my\_ID$''}
        \State At timestep 3, Listen 
        \If{message received}
            \State Interpret the message as an ordered pair of integers $(x,y)$
            \If{ ($y == my\_ID$) } \Comment{$x$ and $y$ are matched}
                \State \partner $\gets$ $x$
            \EndIf
        \EndIf
    \EndIf
    \State Sleep for any timesteps remaining in the round.
\end{algorithmic}
\end{algorithm}

\section{Maximal Matching Analysis}

In this section we prove the correctness and analyze the running time and energy complexity of Algorithm~\ref{alg:main}.

To begin, we show that the Recruit and Accept protocols run by the individual nodes interact correctly, so that at the end of each round there is no disagreement between nodes about whether or not they are matched and to whom. 

\begin{lemma}
With probability one, at the end of every round $t \ge 0$, the \partner\  variables of 
the $n$ nodes encode a well-defined matching $M(t)$.
\end{lemma}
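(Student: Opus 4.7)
The plan is induction on the round number $t$. The base case $t=0$ is trivial: all \partner\ variables are initialized to NULL, which encodes the empty matching. For the inductive step, I would assume that at the end of round $t-1$ the \partner\ variables encode a valid matching $M(t-1)$, and show the same holds at the end of round $t$. A basic observation is that any node whose \partner\ is already non-NULL at the start of round $t$ has exited the while loop and sleeps for the entire round, so it neither changes its own \partner\ nor transmits. Thus all existing matched pairs are undisturbed, and only currently-unmatched nodes can possibly modify their \partner\ variables during round $t$.

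The heart of the proof is showing that the updates made during round $t$ are mutually consistent: whenever a RECRUIT node $v$ sets \partner$(v) = w_{id}$, the corresponding node $w$ sets \partner$(w) = v_{id}$ in the same round, and no other node is improperly matched to either of them. Tracing through RECRUIT\_PROTOCOL, $v$ setting \partner$(v) = w_{id}$ requires $v$ to have cleanly received the pair $(v_{id}, w_{id})$ in timestep 2. Because ACCEPT messages have the form $(x, \text{sender's id})$, the sender must be $w$, a neighbor of $v$ running ACCEPT\_PROTOCOL. For $w$ to have sent that message, $w$ must itself have cleanly received $v_{id}$ in timestep 1, which forces $v$ to be the \emph{unique} neighbor of $w$ that transmitted in timestep 1. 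Since every RECRUIT node sends its ID in timestep 1, this means $v$ is the only RECRUIT neighbor of $w$ in round $t$.

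This last fact is what makes the timestep-3 handshake work. In timestep 3, $v$ sends $(v_{id}, w_{id})$. The only nodes that send in timestep 3 are RECRUIT nodes whose \partner\ was set in timestep 2, and we have just argued that $v$ is the only RECRUIT neighbor of $w$. Therefore $v$ is the unique sender among $w$'s neighbors in timestep 3, so the listening $w$ receives $(v_{id}, w_{id})$ cleanly and sets \partner$(w) = v_{id}$, mirroring $v$'s update. Uniqueness of the other endpoint follows because $w$ sends at most one message in timestep 2 (the single pair $(v_{id}, w_{id})$), so no recruiter $u \neq v$ can ever see a message whose second coordinate is $w_{id}$ and thereby attempt to claim $w$. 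Combined with the inductive hypothesis, the newly added pair $\{v, w\}$ is disjoint from $M(t-1)$ and from any other pair added in round $t$, so $M(t)$ is a valid matching.

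The main obstacle I anticipate is the timestep-3 collision argument, since the lemma would fail if a second RECRUIT node neighboring $w$ could transmit in timestep 3 and drown out $v$. The key insight is that any such node would also have transmitted in timestep 1 and thereby collided with $v$'s recruit message at $w$, contradicting $w$'s having received $v_{id}$ successfully. Once this observation is in hand, the rest of the argument is routine tracing of the pseudocode together with the fact that IDs are unique.
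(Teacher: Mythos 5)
Your proof is correct, and its overall shape --- induction on rounds plus a trace of the three-step handshake using the ``a message is received iff exactly one neighbor sends'' rule --- is the same as the paper's. The noteworthy difference is that you are \emph{more} careful than the paper at the one genuinely delicate point. The recruiter $v$ commits to \partner$(v) = w$ already after timestep 2, before knowing whether $w$ will hear the confirmation; consistency therefore requires that $w$'s timestep-3 reception can never fail once $v$ has committed. The paper's proof glosses over this, asserting only that partners are reassigned when both endpoints receive a message each time they Listen, without explaining why the recruiter's earlier, unilateral commitment is always followed by the accepter's successful reception. Your argument supplies exactly the missing justification: any neighbor of $w$ transmitting at timestep 3 must be a recruiter, hence also transmitted at timestep 1, which would have collided with $v$'s initial message at $w$ --- contradicting the fact that $w$ decoded $v$'s ID cleanly. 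So your write-up establishes the mutual-consistency claim that the paper's proof essentially takes for granted, at the cost of a longer case analysis; otherwise the two arguments coincide.
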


\begin{proof}
Initially, all the vertices are unmatched, with {\tt null} partners,
so $M(0) = \emptyset$.  Later, we observe that the only circumstances under which the
\partner\  variables have their values reassigned is when 
a vertex $v$ has chosen to participate in that round as recruiter, a neighboring vertex $w$ has chosen
to participate in that round as accepter, and furthermore, both $v$ and $w$ receive a message each time
they Listen during their respective protocols.  Since a message is received if and only if exactly one 
neighbor Sends in that timestep, the messages $v$ receives must come from $w$, and vice-versa.
Therefore $v$ stores the ID of $w$ in its partner variable, and vice-versa.

Furthermore, since $v$ and $w$ would not have participated in round $t$ unless their \partner\  variables
were both {\tt null} beforehand, we know by induction that no other vertices have $v$ or $w$ as their partners.
Since this applies for all vertices and all rounds, the pairing is one-to-one, as desired.
% furthermore, all other neighbors of $v$ have chosen not to participate as
% receivers in that round, and all other neighbors of $w$ have chosen not
% to participate as senders in that round.  \todo{Depending how the algorithm is written, the previous statement may not be true.
% However, the correct direction of implication should still be ok.}
% When this circumstance does occur, \partner$(v)$ changes value from null to $w$,
% and \partner$(w)$ changes value from null to $v$.  Since making any number of such changes 
% preserves the one-to-one property of being a matching, adding a disjoint set of edges to $M(t$), 
% the result follows by induction.
\end{proof}

Now, suppose the algorithm has run for some time, and two neighboring vertices $v$ and $w$ remain unmatched.
The following Lemma gives a fairly tight lower bound on the probability that the edge $\{v,w\}$ will be added to the matching in the next timestep.

\begin{lemma}
 \label{lem:particular-edge}
     Let $t \ge 1$, let $\{v,w\} \in E$, and let $X_{v,w,t}$ be the indicator random variable for the event that $v$ and $w$ get matched to each other in round $t$. Then
     %$\Prob{ X_{v,w,t} = 1} \approx \frac {r^2}2 e^{-2}$.
     \[
     \ExpCond{X_{v,w,t}}{ M(t-1) } \ge 
    \frac{r(t)^2}{2} \left(1 - r(t) \right)^{\Delta(t-1) - 1} \indicator{ v,w \in V(t-1) }
     \]
 \end{lemma}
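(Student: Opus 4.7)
The plan is to condition on $M(t-1)$, assume $v, w \in V(t-1)$ (otherwise the right-hand side vanishes), and decompose the event $\{X_{v,w,t}=1\}$ into two disjoint sub-events $\mathcal{E}_{vw}$ (``$v$ runs RECRUIT, $w$ runs ACCEPT'') and $\mathcal{E}_{wv}$ (``$w$ runs RECRUIT, $v$ runs ACCEPT''). For each sub-event I will identify a sufficient configuration of the other vertices' sampling outcomes and exploit independence of the per-vertex uniform samples to lower-bound its probability.

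Tracing through the three timesteps of the handshake in $\mathcal{E}_{vw}$ shows that the following choices suffice to pair $v$ with $w$ by the end of the round: $v$ samples into the recruiter range (probability $r(t)/2$); $w$ samples into the accepter range (probability $r(t)/2$); no other residual neighbor of $w$ chooses RECRUIT, ensuring that $w$ uniquely receives $v$'s broadcast at timestep $1$ and $v$'s confirmation at timestep $3$; and no other residual neighbor of $v$ chooses ACCEPT, ensuring that at timestep $2$ only $w$ is sending in $v$'s neighborhood, since only accepters send at that timestep. Already-matched vertices are asleep for the remainder of the algorithm and pose no threat. Writing $c := |N(v,t-1) \cap N(w,t-1)|$ and using independence of the per-vertex samples, I get
\[
\Prob{\mathcal{E}_{vw} \text{ pairs } v \text{ and } w \mid M(t-1)} \ge \frac{r(t)^2}{4}\left(1-\tfrac{r(t)}{2}\right)^{d(v,t-1)+d(w,t-1)-2-2c}(1-r(t))^{c},
\]
where the $(1-r(t))$ factors arise from common neighbors, which must forbid both RECRUIT and ACCEPT simultaneously.

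The main obstacle is collapsing this mixed-base product into the clean $(1-r(t))^{\Delta(t-1)-1}$ required by the lemma. The key algebraic step is the inequality $(1-r/2)^2 = 1 - r + r^2/4 \ge 1-r$, equivalently $1-r(t)/2 \ge \sqrt{1-r(t)}$. Applied to the $(1-r(t)/2)$ factor, it halves that factor's exponent in base $(1-r(t))$, which then combines with the $(1-r(t))^c$ contribution to yield a total exponent of $(d(v,t-1)+d(w,t-1))/2 - 1$. Since $(d(v,t-1)+d(w,t-1))/2 \le \Delta(t-1)$ and $(1-r(t))^k$ is decreasing in $k$, the probability of $\mathcal{E}_{vw}$ succeeding is at least $\frac{r(t)^2}{4}(1-r(t))^{\Delta(t-1)-1}$. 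A symmetric argument gives the same bound for $\mathcal{E}_{wv}$, and since the two sub-events are disjoint their probabilities add, producing the claimed $\frac{r(t)^2}{2}(1-r(t))^{\Delta(t-1)-1}$.
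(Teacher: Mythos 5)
Your proof is correct and takes essentially the same approach as the paper's: the same sufficient conditions for a successful handshake (common neighbors of $v$ and $w$ abstain entirely, contributing $(1-r(t))$ each, while symmetric-difference neighbors need only avoid one role, contributing $(1-r(t)/2)$ each), the same key inequality $(1-r(t)/2)^2 \ge 1-r(t)$, and the same degree-averaging step $\frac12\left(d(v,t-1)+d(w,t-1)\right) \le \Delta(t-1)$. Your decomposition into two disjoint role-assignment events of probability $\frac{r(t)^2}{4}$ each is just a reorganization of the paper's factorization into ``both participate'' ($r(t)^2$) times ``roles are complementary'' ($\tfrac12$).
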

 
 \begin{proof}
 In order for an edge to form between $v$ and $w$ in round $t$, it is necessary and sufficient for the following four events all to occur:
 \begin{align*}
E_0 &= \{ v, w \in V(t-1) \}, \\
E_1 &= \{ v, w \mbox{ both participate in round $t$ } \}, \\
E_2 &= \{ \mbox{exactly one of $v,w$ participates as recruiter, the other as accepter, in round $t$} \}, \\
E_3 &= \{ \mbox{$E_2$, and $v$ and $w$ receive each others messages without any collision} \}.
 \end{align*}
Note that $E_3 \subset E_2 \subset E_1 \subset E_0$.
For $0 \le i \le 3$, let $X_i = \indicator{E_i}$. 
We now compute expectations, conditioned on the matching at the end of the previous round.
We will prove, below, that
\begin{align}
    \ExpCond{X_1}{M(t-1)} &=  r(t)^2 X_0,  \label{eq:X1} \\
    \ExpCond{X_2}{X_1, M(t-1)} &= \frac{1}{2} X_1, \label{eq:X2} \\
    \ExpCond{X_3}{X_2, X_1, M(t-1)} &\ge \left(1 - r(t) \right)^{\Delta(t-1) - 1} X_2. \label{eq:X3}
\end{align}
 It follows by the law of total expectation that
\[
\ExpCond{X_3}{M(t-1)} \ge \frac{r(t)^2}{2} \left(1 - r(t) \right)^{\Delta(t-1) - 1} X_0,
\]
which is equivalent to the statement of the lemma, noting that  $X_0 = \indicator{ v,w \in V(t-1) }$ and $X_3 = X_{v,w,t}.$
 
To prove the three conditional expectation relations above, first note that equations~\eqref{eq:X1} and \eqref{eq:X2} follow immediately from the definitions of $E_1$ and $E_2$, and the fact that
every vertex in $V(t-1)$ has probability $r(t)/2$ to participate as recruiter in round $t$, and the same probability to participate
as accepter.  

To establish inequality~\eqref{eq:X3}, we note that, conditioned on $E_2$ occurring, for $E_3$ to occur it is 
sufficient\footnote{We note that this is \emph{not} a necessary condition. If $v$ sends a message and two of its neighbors $w$ and $x$ both decide to listen, it could still happen that only $w$ receives the message, because some vertex in $N(x) \setminus N(w)$ sends a message at the same time as $v$,
thereby causing a fortuitous collision at $x$.}
that no other neighbor of $w$ decides to participate in round $t$ in the same role as $v$, and no other neighbor of $v$ decides to
participate in the same role as $w$.  
Thus the conditional probability that $E_3$ occurs is bounded below by the probability that 
\begin{itemize}
    \item no node in $N(v,t-1) \cap N(w,t-1)$ decides to participate at all,
    \item no node in $N(v,t-1) \setminus N[w,t-1]$ decides to participate with the same role as $w$, and
    \item no node in $N(w,t-1) \setminus N[v,t-1]$ decides to participate with the same role as $v$.
\end{itemize}
%Let $A = |N(v,t-1) \cap N(w,t-1)|$ and $B = |N(v,t-1) \setminus N[w,t-1]| + |N(w,t-1) \setminus N[v,t-1]| = |N(v,t-1) \oplus N(w,t-1) \setminus \{v,w\}|$.
Since each node makes its participation decision independently, this probability equals
\[
(1-r(t))^A \left(1-\frac{r(t)}{2}\right)^B  \ge (1 - r(t))^{A + B/2},
\]
where $A = |N(v,t-1) \cap N(w,t-1)|$ and $B = |N(v,t-1) \setminus N[w,t-1]| + |N(w,t-1) \setminus N[v,t-1]|,$
and we have applied the inequality $(1-x/2)^2 \ge 1-x$, which holds for all real $x$.
%$B = |N(v,t-1) \oplus N(w,t-1) \setminus \{v,w\}|$.

\newcommand{\cupdot}{\mathbin{\mathaccent\cdot\cup}}
Next observe that 
\[
N(v,t-1) \setminus \{w\} = \left(N(v,t-1) \cap N(w,t-1)\right) \cupdot \left(N(v,t-1) \setminus N[w,t-1]\right)
\]
and the corresponding equation holds for $N(w,t-1) \setminus \{v\}$, so that 
\begin{align*}
    2A + B  & = |N(v,t-1) \setminus \{w\} | + |N(w,t-1) \setminus \{v\}| \\
    & = d(v, t-1) + d(w, t-1) -2\\
    &\le 2(\Delta(t-1)-1).
\end{align*}
Thus, for a fixed matching $M(t-1)$, the conditional probability of $E_3$ given $E_2$ is at least 
\[
(1 - r(t))^{\Delta(t-1)-1}
\]
This establishes \eqref{eq:X3}, which completes the proof.
% More formally, we have shown the following statement about conditional expectations
% \begin{align*}
%     \ExpCond{X_3}{X_2, X_1, M(t-1)} 
% %    & \ge (1-r(t))^A \left(1-\frac{r(t)}{2}\right)^B X_2 \\
% %    &\ge (1 - r(t))^{A + B/2} X_2 \\
%     & \ge (1 - r(t))^{\Delta(t-1)-1} X_2
% \end{align*}
% which establishes \eqref{eq:X3}, completing the proof.
\end{proof}

Having estimated the probability that a particular matching edge forms at a particular time, we now want to understand the running of the algorithm as a whole. To this end, we make the following definition.

\begin{definition}
Let $1 \le t \le 1+ \tmax$.  We say that the residual graph is good for round $t$ if 
\[\Delta(t-1) r(t) < 1/2.\]
We will also, more concisely, say that $t$ is good, to mean the same thing. 
\end{definition}

Thus, we say ``$t$ is good'' if, prior to round $t$, every vertex $v$ has 
either been matched (and thus $v \notin V(t-1)$)
or enough neighbors of $v$ have been matched to reduce $v$'s residual degree below a target threshold, $\frac{1}{2r(t)}$.
The threshold $\frac{1}{2r(t)}$ was chosen to ensure that any particular vertex listening in round $t$ is unlikely to
miss a message due to a collision.
So, when $t$ is good, any high degree vertex that decides to participate in round $t$ is ``primed to succeed.''
%Intuitively, this means that all the vertices that might potentially participate in round $t$ of the algorithm have residual degrees that are well suited to the current participation rate, so that one may reasonably hope that round $t$ proceeds without too many collisions. 

Our goal will be to prove that, with high probability, $t$ is good for all $1 \le t \le \tmax+1$; 
that is, no degree ever exceeds $\frac{1}{2r(t)}$.  In particular, noting that $r(\tmax + 1) = 1/2$, 
the property of being good for time $\tmax + 1$ means that $\Delta(\tmax) < 1$, which means the final residual graph $G(\tmax)$
is an empty graph; equivalently, $M(\tmax)$ is a maximal matching.

\begin{lemma} \label{lem:good}
Let $A$ be the event that, for all $1 \le t \le \tmax + 1$, the residual graph is good for time $t$.
Then
\[
\Prob{M(\tmax) \mbox{ is maximal }} \ge \Prob{A} \ge 1 - o\left(\frac{1}{n^2}\right).
\]
\end{lemma}

%Theorem~\ref{thm:main} follows immediately from Lemma~\ref{lem:good}

% Note that the residual graph going into round 1 is just the original graph G, and since $r(1)= 1/(2+3\Delta) < 1/2\Delta$, $G$ is good at time 1. In fact there is enough slack here that we will show that because the rate change is gradual, the residual graph remains good for the entire first third of the algorithm (with probability 1). 

% We will then show that in fact, with high probability, the residual graph remains good at all times $1\le t \le 1 + \tmax$.  

In order to prove Lemma~\ref{lem:good} we introduce a random variable that will be used crucially in the remainder of the analysis.
\begin{definition}
For each $1 \le t \le 1+\tmax$ and $v \in V$, let $Z(v,t)$ denote the indicator random variable for the event 
$\left\{ t \mbox{ is good and }
d(v,t) \ge \frac{1}{3r(t)} \right\}$.  By convention, if $v \notin V(t)$, $d(v,t) = 0$, so $Z(v,t) = 0$ also.
\end{definition}
   
The intuition behind this definition is that the event $\{Z(v,t) = 1\}$ means that despite the best
possible conditions for $v$ getting matched: many available unmatched neighbors (since $d(v,t-1) \ge d(v,t) > \frac{1}{3r(t)}$), and a small chance of 
collisions (since $t$ is good), $v$ still failed to get matched in round $t$.  Thus this event represents a lost opportunity for vertex $v$.
% The event $Z(v,t)=1$ means that, in spite of round $t$ being good, vertex $v$ remains an 
% unmatched node with many unmatched neighbors
% at the end of round $t$.  
% In other words most of its neighbors in the residual graph fail to get matched as well.    
Since a vertex cannot get matched in a round unless it participates, which happens with probability only $r(t)$, we must of course
be prepared for many such opportunities to be lost.
However, the following lemma shows that there is a decent chance that any particular such opportunity is \emph{not} lost.

\begin{lemma} \label{lem:Z-condl-upper-bd}
\[
\ExpCond{Z(v,t)}{M(t-1)} \le 1 - \frac{r(t)}{6e}.
\]
\end{lemma}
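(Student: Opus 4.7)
The plan is to condition on $M(t-1)$, dispatch the cases in which $Z(v,t)=0$ deterministically, and then in the remaining nontrivial case bound from below the probability that $v$ itself acquires a partner in round $t$. Under the natural reading of the definition---in which $Z(v,t)=1$ requires $v$ to still lie in $V(t)$, or equivalently $d(v,t)$ is taken to be $0$ once $v$ leaves $V(t)$---the event ``$v$ gets matched in round $t$'' is then sufficient to force $Z(v,t)=0$.

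First I would handle the two deterministic cases. If $\Delta(t-1) > 1/r(t)$, the second clause in the definition of $Z(v,t)$ is already false, so $Z(v,t)=0$ with probability one. If instead $d(v,t-1) < 1/(2r(t))$, then since the matching only grows we have $V(t) \subseteq V(t-1)$ and hence $d(v,t) \le d(v,t-1) < 1/(2r(t))$, so the first clause is also deterministically false. In either case the bound $\mathbb{E}[Z(v,t)\mid M(t-1)] \le 1 - r(t)/(4e^2)$ holds trivially.

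The interesting case is when $\Delta(t-1) \le 1/r(t)$, $v \in V(t-1)$, and $d(v,t-1) \ge 1/(2r(t))$. Here I would lower bound the probability that $v$ gets matched in round $t$ by summing the bound from Lemma~\ref{lem:particular-edge} over $w \in N(v,t-1)$: since $v$ has at most one partner, the events $\{v \text{ matches } w\}$ are mutually exclusive in $w$, so their probabilities add, giving
\[
\Prob{v \text{ matched in round } t \mid M(t-1)} \ge d(v,t-1) \cdot \frac{r(t)^2}{2}\,(1-r(t))^{\Delta(t-1)-1}.
\]
Substituting $d(v,t-1) \ge 1/(2r(t))$ and $\Delta(t-1)-1 \le 1/r(t)-1$ (using that $(1-r(t))^x$ is decreasing in $x$) yields the lower bound $\tfrac{r(t)}{4}\,(1-r(t))^{1/r(t)-1}$.

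To close the argument it suffices to show $(1-r)^{1/r-1} \ge 1/e^2$ for $r \in (0,1)$, equivalently $h(r) := -(1-r)\ln(1-r)/r \le 2$. A short calculus check (differentiate and observe $h'(r) = (r + \ln(1-r))/r^2 < 0$) shows $h$ is decreasing on $(0,1)$ with $\lim_{r \to 0^+} h(r) = 1$, so in fact $h(r) \le 1$ and even $(1-r)^{1/r-1} \ge 1/e$. Combining everything gives $\Prob{v \text{ matched in round } t \mid M(t-1)} \ge r(t)/(4e^2)$, which is precisely the lower bound on $\Prob{Z(v,t)=0 \mid M(t-1)}$ that we need. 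The one point I expect to require a small interpretive comment---rather than a real obstacle---is the convention that once $v$ is matched it no longer witnesses $Z(v,t)=1$; everything else is a direct application of Lemma~\ref{lem:particular-edge} together with the calculus estimate above.
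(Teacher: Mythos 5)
Your proof is correct and follows essentially the same route as the paper's: dispatch the cases where $M(t-1)$ deterministically forces $Z(v,t)=0$, then lower-bound the probability that $v$ gets matched in round $t$ by summing the disjoint events $X_{v,w,t}$ over $w \in N(v,t-1)$ via Lemma~\ref{lem:particular-edge} and substituting $d(v,t-1)\ge 1/(2r(t))$ and $\Delta(t-1)\le 1/r(t)$. The only (harmless) difference is the closing elementary estimate: you prove $(1-r)^{1/r-1}\ge 1/e$ on all of $(0,1)$ by a monotonicity argument, whereas the paper uses $1-x \ge e^{-2x}$ on $[0,3/4]$ together with the fact that $r(t)\le 3/4$.
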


\begin{proof}
    Recall that, by definition, $Z(v,t) = 1$ if and only if: $v \in V(t)$ and $d(v,t) \ge 1/3r(t)$
    and $\Delta(t-1) < 1/2r(t)$.
    Now, since the degrees at time $t-1$ are determined by $M(t-1)$, and there is nothing to prove
    when the conditional information implies $Z(v,t)$ is identically zero, we may assume 
    $d(v,t-1) \ge 1/3r(t)$ and $\max_w d(w,t-1) \le 1/2r(t)$.
    
    Also, note that $v \notin V(t)$ will occur if and only if $X_{v,w,t} = 1$ for some $w \in N(v,t-1)$.
    Since these events are disjoint, we may sum their probabilities, obtaining
    \begin{align*}
            \ExpCond{Z(v,t)}{M(t-1)} &\le 1 - \sum_{w} \ExpCond{X_{v,w,t}}{M(t-1)} \\
            &\le 1 - \sum_{w\in N(v,t-1)}\frac{r(t)^2}{2} \left(1 - r(t) \right)^{\Delta(t-1)-1} & \mbox{by Lemma~\ref{lem:particular-edge}.}  \\
            &\le 1 - d(v,t-1) \frac{r(t)^2}{2} \left(1 - r(t) \right)^{\Delta(t-1)-1} \\
            &\le 1 - d(v,t) \frac{r(t)^2}{2} \left(1 - r(t) \right)^{1/(2r(t))}
    \end{align*}
where the last inequality follows because $d(v, t-1) \ge d(v,t)$ and $\Delta(t-1)-1 \le 1/(2r(t))$ since $t$ is good. 

    Additionally, since $Z(v,t) = 0$ unless $d(v,t) \ge \frac{1}{3r(t)}$, we have
        \begin{align*}
            \ExpCond{Z(v,t)}{M(t-1)}
            &\le 1 - \frac{1}{3r(t)} \frac{r(t)^2}{2} \left(1 - r(t) \right)^{\frac{1}{2r(t)}} \\
            &\le 1 - \frac{r(t)}{6} \left(1 - r(t) \right)^{\frac{1}{2r(t)}} \\
            &\le 1 - \frac{r(t)}{6 e}.
    \end{align*}
    Here the last inequality follows because for $0 \le x \le 1/2,$  we have $1-x \ge e^{-2 x}$, and $r(t) \le 1/2$ for all $t \le 1+\tmax$. 
\end{proof}

The above lemma shows that in a good round $t$, a particular vertex $v$ has only a bounded chance to ``misbehave''. To prove Lemma~\ref{lem:good} we will show that the first bad round, if any, must be preceded  by a long sequence of good rounds $t'$ on which some vertex $v$ misbehaves (\emph{i.e.,} $Z(v,t')=1$). Since this is unlikely, it must follow that, with high probability, all rounds are good. 

\begin{proof}[Proof of Lemma~\ref{lem:good}]
Assume, for contradiction, that there exists a bad round.  Let $t$ be the first bad round; that is, $t$ is minimal such that 
$\Delta(t) \ge \frac{1}{2r(t)}.$
We note that an easy calculation shows $r(t') < \frac{1}{2\Delta}$ for $t' \le 1 + \tmax/3$, so it must be the case that $t > \tmax/3$.

Consider the set $I = \left\{ t' < t \,\mid \,   3r(t') > 2r(t) \right\}$.  Since $r$ is an increasing function, $I$ is
an interval; let $I = \{t_0, t_0 + 1, \dots, t-1\}$.
Another easy calculation shows that $r(1) < \frac{1}{3\Delta}$, so $t_0 \ge 1$.

% First note that for $1\le t \le 1+\tmax/3$,
% \begin{align*}
%     2r(t) & = \frac{2}{2 + 3\left(1-\frac{t-1}{\tmax}\right)\Delta} \\
%     & \le \frac{2}{2 + 3\left(1-\frac{1}{3}\right)\Delta} \\
%     &= \frac{1}{1+\Delta}\\
%     & \le \frac{1}{\Delta(t-1)}
% \end{align*}
% It follows that in this range, $t$ is good with probability 1. 

% Let $t$ be the first bad round (if any). We know from above that $t$ must be at least $\tmax /3$.
% Consider the set $I$ defined by 
% \[
% I = \left\{ t' < t \,\mid \,   3r(t') > 2r(t) \right\}
% \]
% Since $r$ is a monotone function, $I$ is an interval. \varsha{do we need to explicitly point out that I is not empty? Also is this the right place to point out that the smallest element of I is bigger than 1?}

Since $t$ is by definition the \emph{first} bad round, every $t' \in I$ is good. 
On the other hand, there is a vertex $v$ that is a witness to $t$ being bad, \emph{i.e.,} 
$d(v, t-1) r(t) \ge \frac12 $. Then, for every $t' \in I,$
\[
d(v, t') \ge d(v, t-1) \ge \frac{1}{2r(t)} > \frac{1}{3r(t')}
\]
Combining the two facts above, we conclude that 
\begin{equation}\label{eqn:Z}
   Z(v, t') = 1 \mbox{ for all }t' \in I. 
\end{equation}

Let $\mathcal{E}_{v,t}$ be this event, \emph{i.e.,} the event that $\displaystyle \prod_{t' \in I} Z(v,t') = 1$.  We want to compute the probability of 
$\mathcal{E}_{v,t}$. Recall that $I = \{t_0, \dots t-1\}$. Then

\begin{align*}
    \Prob{\mathcal{E}_{v,t}} &=  \Exp \left( \prod_{t' = t_0}^{t-1} Z(v, t') \right)  \\
    &= \Exp \left( \ExpCond{\prod_{t' = t_0}^{t-1} Z(v, t')}{M(t-2)} \right) & \mbox{by the Law of Total Expectation} \\
    &= \Exp \left( \left( \prod_{t' = t_0}^{t-2} Z(v, t') \right) \; \ExpCond{Z(v, t-1)}{M(t-2)} \right) & \mbox{(*)} \\
    &\le \Exp \left( \prod_{t' = t_0}^{t-2} Z(v, t') \right) \left(1 - \frac{r(t-1)}{6e}\right) & \mbox{by Lemma~\ref{lem:Z-condl-upper-bd}} 
\end{align*}    
Line (*) follows since $M(t-2)$ determines $Z(v,t_0), \dots, Z(v,t-2)$.
% Applying this
% By repeating the steps from using the Law of Total Expectation to applying Lemma~\ref{lem:Z-condl-upper-bd} on the remaining expectation of a product, we see by induction that     
Proceeding inductively, we have

\begin{align*}
    \Prob{\mathcal{E}_{v,t}} &\le \prod_{t' \in I} \left(1 - \frac{r(t')}{6e}\right) \\
    &\le \exp \left( \frac{-1}{6e} \sum_{t' \in I} r(t') \right) & \mbox{since  for all $x$, $1-x \le e^{-x}$}
 \end{align*} 
To get a handle on the expression on the  right hand side, we need a lower bound on the sum of the participation rates. Let $t^{*} \in \mathbb{R}$ be such that $r(t^{*}) = \frac23 r(t)$. Then $t_0 -1 \le t^{*} < t_0$, and we have 

\begin{align*}
    \sum_{t' = t_0}^{t} r(t') & \ge \int_{t_0-1}^{t} r(t') \mathrm{d}t' & \mbox{upper Riemann sum}\\
    &\ge \int_{t^{*}}^{t} r(t') \mathrm{d}t'\\
        &= \int_{t^{*}}^{t} \frac{1}{2 +3(1 -\frac{t'-1}{\tmax})\Delta} \mathrm{d}t'\\
        &= \frac{\tmax}{3\Delta} \int_{1/r(t)}^{1/r(t^*)} \frac{1}{y} \mathrm{d}y  & \mbox{ setting } y = 1/r(t')\\ 
    &= \tfrac{C}{3}\log n \log\left(\frac{r(t)}{r(t^*)}\right) &\mbox{ since $\tmax = C\Delta \log n$} \\
    &= \tfrac{C}{3}\log n \log(3/2)
\end{align*}

But $t \notin I$, so we need to correct the above:
\begin{align*}
\sum_{t' \in I} r(t') & = \left(\sum_{t' = t_0}^{t} r(t')\right) -r(t)\\
& \ge \frac{C}{3} \log n \log(3/2) -\frac12\\
& \ge \frac{C}{8} \log n
\end{align*}

Plugging this back into the probability calculation, 
\begin{align*}
    \Prob{\mathcal{E}_{v,t}} &\le \exp \left( - \frac{1}{6e} \sum_{t' \in I} r(t') \right) \\
    & \le \exp \left(- \frac{ C \log n}{48e} \right) \\
    & \le \frac1{n^{4+\eps}}.
\end{align*}
where the last inequality holds for suitably large values of $C$, \emph{e.g.,} when $C=1000$.

Taking a union bound over the $n \tmax = O(n^2 \log n)$ events $\mathcal{E}_{v,t}$ completes the proof of the lemma.
\end{proof}

Lemma~\ref{lem:good} established that Algorithm~\ref{alg:main} almost surely outputs a maximal matching.
All that remains is to analyze the algorithm's energy cost. 

\begin{proof}[Proof of Theorem~\ref{thm:main}]
The upper bound on energy use comes from a simple analysis of the number of rounds each vertex participates in.
Clearly, the energy use is at most 3 times the number of rounds the vertex participates in, which is at most
the number of heads that would be flipped in $\tmax$ independent coin flips, with probabilities of heads
$r(1), r(2), \dots, r(\tmax)$.
Note that 
\begin{align*}
     \sum_{t=1}^{\tmax -1} r(t)  & \le \int_{1}^{\tmax} r(t) \mathrm{d}t  & \mbox{ lower Riemann sum} \\
        &= \frac{\tmax}{3\Delta} \int_{1/r(\tmax)}^{1/r(1)} \frac{1}{y} \mathrm{d}y  & \mbox{ setting } y = 1/r(t)\\ 
 & = \tfrac{C}{3} \log n \log\left(\frac{r(\tmax)}{r(1)}\right) &\mbox{ since $\tmax = C\Delta \log n$} \\\\
 & = \tfrac{C}{3} \log n \log\left(1+ \tfrac{3}{2}\Delta\right)
\end{align*}
Thus the expected energy use is at most 
\[
3 \sum_{t=1}^{\tmax} r(t) = r(\tmax) + 3 \sum_{t=1}^{\tmax-1} r(t) \le C \log n \log\left(1+ \tfrac{3}{2}\Delta\right) +\tfrac{1}{2} = O\big((\log n) (\log \Delta)\big)
\]
% \approx 9 C \log n \log(4 C \log n + \tmax / (4 C \log n))) $
% which is about $9 C \log n \log((4+n)/4) = O(\log^2 n)$.
Chernoff's bound, together with with a union bound over the $n$ vertices,
implies the high-probability upper bound on expected energy cost.
\end{proof}

\section{Neighbor Assignment Functions}\label{sec:buddies}

Motivated by the problem of assigning nodes to backup data from their neighbors in a sensor network, 
we introduce the following definition.  As we shall see later, it is extremely closely connected to 
the established concept of matching covering number.

\begin{definition}
Given graph $G = (V,E)$, a \emph{neighbor assignment function (NAF)} is a function $f : V \to V$ such that for all $v \in V$, 
$\{v, f(v)\} \in E$.  Equivalently, we may think of this as an oriented subgraph of $G$, in which each vertex has out-degree 1. 
The \emph{load} of the assignment is the maximum in-degree of this digraph.  Equivalently, load is $\max_{v \in V} |f^{-1}(v)|$.
The \emph{minimum NAF load} of $G$ is the minimum load among all NAFs for $G$.  
\end{definition}

\noindent {\bf Note:} In the case when $G$ is bipartite, NAFs are also known as ``semi-matchings.''  (See, for example, 
\cite{harvey2006semi,czygrinow2016distributed}.)
However, since we are particularly concerned with the non-bipartite case, we preferred to introduce a different term.

In the context of backing up data, we think of the assigned node $f(v)$ as the node who will store a backup copy of $v$'s data.
Our goal for this section is to find a NAF whose load is small.  In the energy-aware radio network
setting, we also want to ensure that the per-node energy use is small.

% The undirected edges correspond to vertices that back each other up. The directed edges point from vertices needing to be backed up to vertices that are backing them up. Thus the in-degree of a vertex corresponds to its backup load. We want the maximum such load to be no more than $L = o(n)$, preferably constant.  
% Note that such a structure may not exist for a graph. For instance when $G$ is a star, the maximum load is $n-1$.

Our next result establishes a close connection between the load of the best NAF for a graph and the minimum
number of matchings needed to cover all of its vertices.

\begin{definition}
The \emph{matching cover number} of a graph $G$, denoted $\mathrm{mc}(G)$, is the minimum integer $k$ such that there exists a set of $k$
matchings of $G$, 
whose union contains every vertex of $G$.
\end{definition}

\begin{theorem} \label{thm:NAF-mc}
    For every graph $G$, the minimum NAF-load of $G$ equals the matching cover number of $G$, unless the
    NAF-load of $G$ equals 1.  If the NAF-load of $G$ equals $1$, the matching cover number of $G$ can be 
    $1$ or $2$.
\end{theorem}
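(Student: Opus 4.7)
The plan is to prove the theorem in three parts: the easy inequality $\text{NAF-load}(G) \le \mathrm{mc}(G)$, the harder inequality $\mathrm{mc}(G) \le \text{NAF-load}(G)$ under the hypothesis $\text{NAF-load}(G) \ge 2$, and a structural analysis of the special case $\text{NAF-load}(G) = 1$. For the easy direction, I would define, from matchings $M_1, \ldots, M_k$ whose union of endpoints equals $V$, the NAF $f$ by letting $f(v)$ be $v$'s partner in any single $M_{i(v)} \ni v$. Any $u \in f^{-1}(w)$ was partnered with $w$ in some matching, and since each matching pairs $w$ with at most one vertex, $|f^{-1}(w)| \le k$ follows immediately.

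For the hard direction with $k := \text{NAF-load}(G) \ge 2$, I would begin with a NAF $f$ of load $k$ and construct $k$ matchings covering $V$ in two steps. First, preprocess $f$ into a NAF $f'$ of load $\le k$ with the additional ``2-cycle'' property that every saturated vertex $v$ (meaning $|f'^{-1}(v)| = k$) also satisfies $f'(f'(v)) = v$. The preprocessing is an iterative exchange: for each bad saturated vertex $v$ (saturated but with $f(f(v)) \ne v$), walk upward in $v$'s preimage tree along a non-cycle branch until reaching an unsaturated vertex $u^*$---which must exist at finite depth because, when $k \ge 2$, $v$ has $k-1 \ge 1$ preimage branches extending into trees hanging off any functional cycle through $v$, and those trees terminate in leaves with $|f^{-1}| = 0$---then redirect the out-arc of $u^*$'s parent $p$ from $f(p)$ to $u^*$, creating a new 2-cycle $\{p, u^*\}$ and reducing $|f^{-1}(f(p))|$ by one. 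Iterating propagates the load decrease back toward $v$ until $v$ itself becomes unsaturated. Second, once $f'$ has the 2-cycle property, the underlying simple graph $H^*$ of $f'$'s functional digraph satisfies $\Delta(H^*) \le k$---at each saturated $v$, $f'(v)$ coincides with one of the $k$ preimages and contributes no additional distinct neighbor---and $H^*$ covers $V$. I would then apply K\"onig's theorem to the bipartite source--target auxiliary graph of $f'$ to obtain a $k$-coloring of arcs in which no two same-colored arcs share a source or a target. The only way a color class could fail to be a $G$-matching is a collision between an out-arc and an in-arc at the same vertex $v$, but by the 2-cycle property such a collision at a saturated $v$ between the out-arc $(v, w)$ and the in-arc $(w, v)$ is between the two directed copies of the same undirected edge $\{v, w\}$, so the color class remains a valid matching whose union covers every vertex (each $v$ is incident to at least its own out-arc).

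The exceptional case $\text{NAF-load}(G) = 1$ is handled structurally. Load $1$ forces $f$ to be a fixed-point-free permutation whose cycles trace vertex-disjoint simple cycles of $G$ (2-cycles corresponding to single edges). If every cycle has even length, alternating edges in each cycle combine into a perfect matching of $G$, giving $\mathrm{mc}(G) = 1$; otherwise some cycle is odd of length $\ge 3$, which has no perfect matching, so $\mathrm{mc}(G) \ge 2$, while $\mathrm{mc}(G) \le 2$ always holds because each cycle (even or odd) admits a vertex-cover by two matchings that assemble across the disjoint cycles. The examples $K_2$ and $C_3$ realize $\mathrm{mc}(G) = 1$ and $\mathrm{mc}(G) = 2$ respectively, completing the proof. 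The main obstacle I anticipate is rigorously formalizing the preprocessing step---particularly verifying termination (via a potential function counting bad saturated vertices weighted by tree-distance to the nearest unsaturated descendant) and arranging the K\"onig coloring so that the out-arc of each saturated $v$ receives the same color as the specific in-arc from its 2-cycle partner rather than from some other in-arc, which I expect to handle via augmenting-path-style recoloring on top of the basic K\"onig construction.
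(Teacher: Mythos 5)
Your easy direction and your treatment of the load-$1$ case are correct and essentially match the paper's. The genuine problem is the final K\"onig step in the hard direction. A proper edge coloring of the bipartite source--target graph only forbids two same-colored arcs from sharing a \emph{source} or sharing a \emph{target}; it says nothing about the out-arc of a vertex $v$ colliding with an in-arc of $v$, and such collisions arise at \emph{any} vertex of positive in-degree, not only at saturated ones. Your 2-cycle property (and the promised recoloring) addresses collisions only at saturated vertices; at unsaturated vertices on directed paths or on cycles of length at least $3$ there is no constraint at all. Worse, once you insist that \emph{every} arc receives one of $k$ colors and that every color class be a matching of $G$, you are demanding a proper $k$-edge-coloring of the underlying simple graph $H^*$, and that can be impossible even though the theorem is true. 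Concretely, take $G = C_3 \sqcup P_3$. Its minimum NAF-load is $k=2$ (forced by the center of $P_3$), and a legitimate minimum-load NAF orients $C_3$ as a directed triangle; your preprocessing leaves that component untouched, since its vertices are unsaturated. The three undirected edges of the triangle pairwise intersect, so any assignment of $2$ colors to its three arcs puts two intersecting edges in one class. No recoloring can repair this, and you cannot simply leave an arc uncolored, because your coverage argument (``each $v$ is incident to its own colored out-arc'') requires all arcs to be colored. So the proof as stated does not go through.

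The statement survives because a matching cover need only cover the \emph{vertices}, not the edges: in the triangle one edge may be discarded, and two matchings still cover all three vertices. This is exactly the slack your framework forfeits. Two repairs are available. The paper's route collapses the NAF further, repeatedly re-pointing the partner of any in-degree-zero vertex back to it, until every component is a directed cycle or a star with one bidirected edge, and then writes down explicit covers ($\le 2$ matchings per cycle, $d \le L$ per star). Alternatively, your correct observation that $\Delta(H^*) \le k$ (which is the real payoff of your 2-cycle preprocessing) can be finished without any edge coloring: $H^*$ has minimum degree $\ge 1$, so take an inclusion-minimal edge cover of $H^*$; its components are vertex-disjoint stars (a middle edge of any path of three edges, or any triangle edge, could be deleted), each with at most $\Delta(H^*) \le k$ edges, and giving each star's edges distinct colors from $\{1,\dots,k\}$ produces $k$ matchings covering $V$. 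With that substitution your architecture would yield a proof genuinely different from the paper's; without it, the K\"onig step is a real gap, not a formalization detail.
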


\begin{proof}
  Suppose $V=V(G)$ is covered by the union of matchings $M_1, \dots, M_L$.  Then assigning each vertex
  $v$ to its partner in the first matching that contains $v$ is an NAF with maximum load at most $L$.
  This establishes that the NAF-load is always at most the matching covering number.
  
  Before we begin the proof for the reverse implication, we make the following general observation about
  digraphs with out-degree 1.  By considering the unique walk obtained by starting at any vertex $v$, and
  repeatedly following the edge $\{v,f(v)\}$, we can see that each weakly connected component consists of
  one oriented cycle (of length $\ge 2$), together with one or more ``tributary'' trees, each rooted at a 
  node of this cycle, and oriented towards that root.  See Figure~\ref{fig:tributaries}.
  
  \begin{figure}
      \centering
      \includegraphics[scale=0.3]{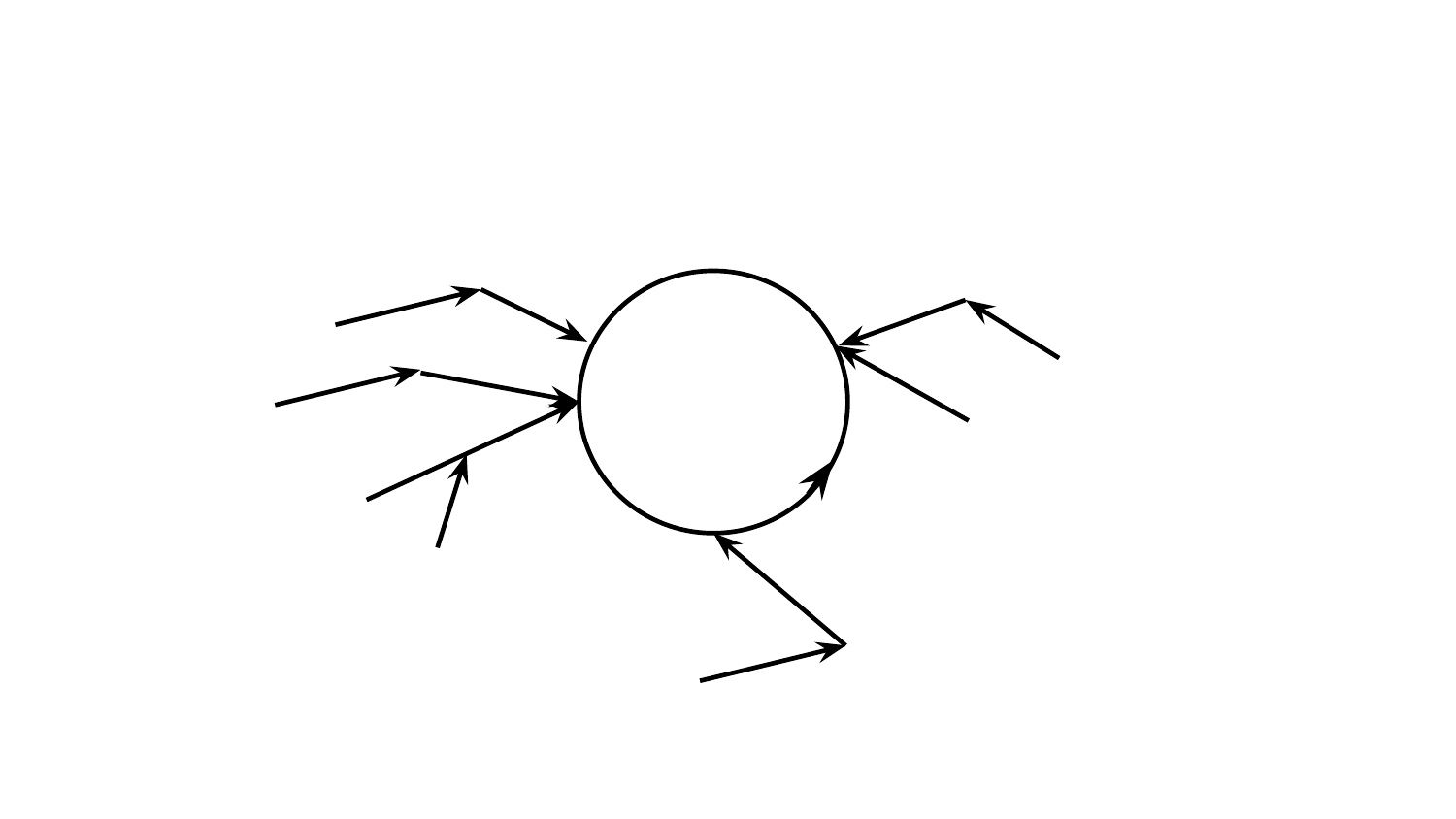}
      \caption{Sketch of the digraph of a NAF with only one component.  Each component can be seen as a 
      directed cycle, along with zero or more ``tributary trees.''}
      \label{fig:tributaries}
  \end{figure}
  
  Now, if $f$ has any leaf, that is, a node $v$ whose load is zero, we can obtain a new NAF by 
  reassigning $f(v)$ to point back to $v$.  This increases the load at $v$ to 1, decreases the load by 1 at $f(f(v))$, 
  and does not change any other vertex loads.  Repeated application of this rule to all leaves in turn, eventually leads
  to a NAF whose components are all either (a) directed cycles, which do not have any leaves, or (b) stars with one bi-directed edge.
  See Figure~\ref{fig:NAF-reduction}.
  In case (b), the component consists of one node, $r$, of in-degree $i \le L$, $i$ nodes, $x_1, \dots, x_i$, each with an edge
  directed to $r$, and one edge from $r$ to $x_1$. 

\begin{figure}
    \centering
    \includegraphics[scale=0.3]{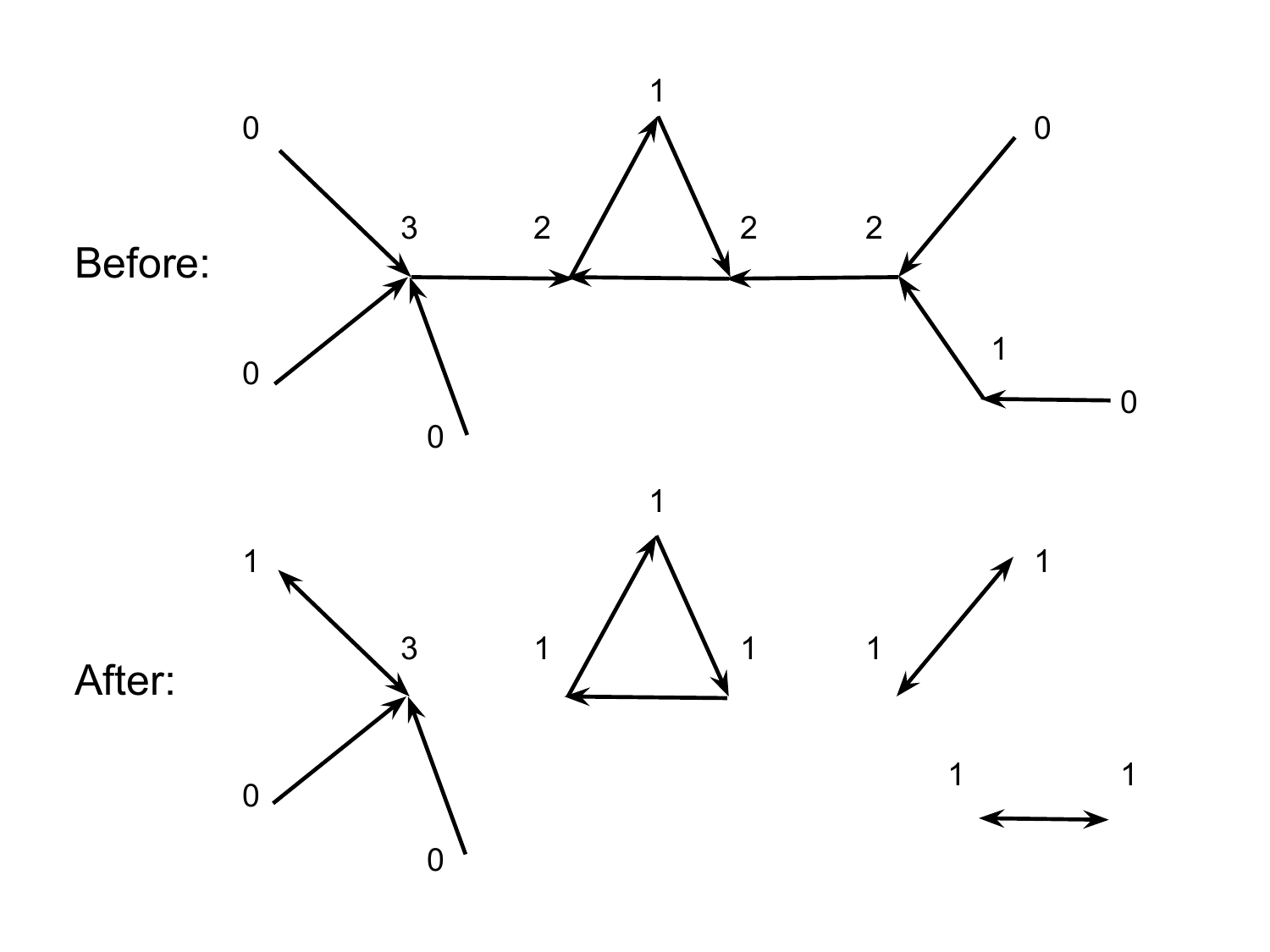}
    \caption{Example of the conversion of a given NAF by decreasing the number of load-zero nodes.
    The number by each node indicates its load.  Note that, after the conversion, the maximum load did not increase,
    and the connected components of the NAF are now all directed cycles and/or stars with one bi-directed edge. 
    (The components of size 2 are both.)}
    \label{fig:NAF-reduction}
\end{figure}

  It is easy to see that, for a directed cycle, whose edges are $e_1, \dots, e_{\ell}$, 
  a single matching consisting of the even edges, $e_2, e_4, \dots,$, will cover all the vertices if $\ell$ is even,
  and all but one vertex if $\ell$ is odd.  Therefore, one matching covers the component if $\ell$ is even, and
  two if $\ell$ is odd.
  
  For the star with bi-directed edge, the maximum load equals the degree, $d$, of the center vertex.  And a matching cover
  consists of the $d$ single edges that make up the star.
  
  In this way, we can build up our matching cover component by component, noting that if every component has a matching cover
  of size at most $k$, then so does the entire graph.  Since the only case when our matching cover was bigger than the 
  maximum load for the component was when $L=1$, the proof is complete.
\end{proof}

Wang, Song, and Yuan~\cite{wang2014matching} have given an $O(n^3)$-time centralized algorithm for finding the minimum number of matchings needed
to cover a graph.  In light of Theorem~\ref{thm:NAF-mc}, their result implies an $O(n^3)$ time algorithm for finding the minimum-load NAF for any graph.

In the distributed and low-energy setting, it is unlikely that we can achieve such an ambitious goal.  For instance, a node cannot determine
its exact degree without sending and/or receiving at least that many messages successfully, which may require linear energy.  Instead, we aim for the less ambitious goal
of finding a NAF whose maximum load is well within our energy budget.  Our next result shows that this is possible, assuming one exists.

\begin{algorithm}
\caption{Low-Energy Distributed algorithm to compute a NAF
in a Radio Network.} \label{alg:NAF}
\begin{algorithmic}[1]
    \State{Run our Maximal Matching algorithm on $G$.}
    \State{For each edge $\{u,v\}$ in the matching, mark $u,v$ as assigned, and assign them to each other.}
    \For{ $i \gets 1$ to $k$} 
         \State{Run the maximal matching algorithm on $G$, modified so that only unassigned nodes \phantom . \hspace{2em} are allowed to recruit,
         and only assigned nodes are allowed to accept.}
         \State{For each edge $\{u,v\}$ in the matching, mark $u,v$ as assigned, and (re-)assign them to \phantom . \hspace{2em} each other.}
    \EndFor
\end{algorithmic}
\end{algorithm}

First however, we need another definition.

\begin{definition}
For a graph $G$, a partial NAF is a function $f: S \to V$, where $S \subseteq V$.
As before, we define the \emph{maximum load} of $f$ as $\max_{v \in V} |f^{-1}(v)|$.
We say the \emph{coverage} of $f$ is $|S|/|V|$.
\end{definition}

Our motivation for introducing partial NAF's stems from the following possibility.
A particular graph $G$ may not have any NAF's whose maximum load is less than its maximum degree, $\Delta(G)$.
Despite this, it is possible that, say, 90\% of its vertices would be satisfied by a partial NAF whose maximum load 
is $1$.  In this case, we might prefer the partial NAF to the best complete one, in spite of the unassigned vertices.
Our next result shows that running Algorithm~\ref{alg:NAF} should produce a result that is, in some sense, 
competitive with every partial NAF for $G$.

% Our next result establishes that, we can achieve the more modest goal of an NAF whose load is $O(\log n)$ times the optimal load, 
% using only polylog times load energy in
% the distributed radio network setting.
\newcommand{\eee}{\mathrm{e}}
\begin{theorem} \label{thm:NAF-alg}
Let $\eps \ge 0$, and let $G$ be a graph for which there exists a partial NAF with coverage $1-\varepsilon$
and maximum load $L$.
Then Algorithm~\ref{alg:NAF}, run with parameter $k$, will, with probability $1 - O\left(\frac{k}{n^2}\right)$, output a 
partial NAF with coverage $(1-\varepsilon)(1-e^{-k/(2L+2)})$ and maximum load at most $k$.
Its per-vertex expected energy usage is $O(k \log^2 n)$.  In particular, if $k \ge (2L+2)\log(n)$,
the output NAF will also have coverage $1-\eps$.
\end{theorem}

\begin{proof}
Let $f$ be a partial NAF with coverage $1 - \varepsilon$ and maximum load $L$.
First we convert $f$ into a complete NAF on a subgraph of $G$.  Let $S$ be the 
domain of $f$, and let $R$ be the range of $f$.  We extend $f$ to the domain $S \cup R$
by, for every vertex $v \in R \setminus S$, arbitrarily choosing a vertex $w \in f^{-1}(v)$,
and defining $f(v) = w$.  Since a different $w$ is necessarily chosen for each $v \in R \setminus S$,
this increases the load of $f$ by at most $1$.

Now that $f$ is a NAF for the subgraph induced by $S \cup R$, we apply Theorem~\ref{thm:NAF-mc}
to deduce the existence of a matching cover of size $L+1$ that includes every vertex of $S \cup R$.
This implies that the maximum matching covers at least $(1-\varepsilon)n /(L+1)$ vertices.  
Hence every maximal matching covers at least $(1-\varepsilon)n/(2L+2)$ vertices.
So the first call to the maximal matching algorithm will assign neighbors to at least this many vertices.

In subsequent rounds, the modification to the maximal matching algorithm has the effect of making
it run on the bipartite graph where the bipartition is into the assigned and unassigned vertices.
By the pigeonhole principle, at least one matching, $M$, from the $L+1$ in the matching cover
must cover at least a $1/(L+1)$ fraction of the unassigned vertices in $S \cup R$.  
Since the first matching was maximal, no edges in $G$ have both endpoints unassigned; therefore, 
$M$ is a matching within the bipartite graph being fed into our maximal matchings algorithm.  
Therefore, the maximal matching that is found must cover at least a $1/2(L+1)$ fraction of the unassigned
vertices.  It follows that after $k$ iterations, at most 
\[
\left( 1 - \frac{1}{2L+2} \right)^k (1 - \varepsilon) n \le e^{-k/(2L+2)} (1 - \varepsilon) n
\]
nodes from $S \cup R$ will remain unassigned.

Since each run of the maximal matching algorithm succeeds 
with probability $1 - O(1/n^2)$, a union bound over the $k$ outer loop iterations 
establishes the high-probability bound.  
\end{proof}

We point out that, at the end of each loop iteration of Algorithm~\ref{alg:NAF}, any assigned vertices
that were not matched with an unassigned node in that iteration must have no unassigned neighbors, and
can therefore go to sleep for the rest of the algorithm.  If desired, Algorithm~\ref{alg:NAF} can even
be run with parameter $k$ set to $\infty$, since the algorithm will now terminate once a NAF is found.

\section*{Acknowledgments}

The authors would like to thank the anonymous referees of the conference version of this paper
for helpful comments and suggestions.

\bibliography{match}

\end{document}